\let\doendproof\endproof
\renewcommand\endproof{~\hfill\qed\doendproof}
\newcommand{\klevel}{\kappa}
\newcommand{\maxklevel}{\kappa_{\max}}
\title{Drawing Arrangement Graphs In Small Grids,\\
Or How To Play Planarity}
\author{David Eppstein\thanks{This research was supported in part by NSF grants
0830403 and 1217322 and by the Office of Naval Research under grant
N00014-08-1-1015.}}
\institute{Department of Computer Science, University of California, Irvine}
\begin{document}
\maketitle

\begin{abstract}
We describe a linear-time algorithm that finds a planar drawing of every graph of a simple line or pseudoline arrangement within a grid of area $O(n^{7/6})$. No known input causes our algorithm to use area $\Omega(n^{1+\epsilon})$ for any $\epsilon>0$; finding such an input  would represent significant progress on the famous $k$-set problem from discrete geometry. Drawing line arrangement graphs is the main task in the \emph{Planarity} puzzle.
\end{abstract}

\pagestyle{plain}

\section{Introduction}

\begin{wrapfigure}[14]{r}{0.45\textwidth}
\vskip-4.8ex
\centering\includegraphics[scale=0.225]{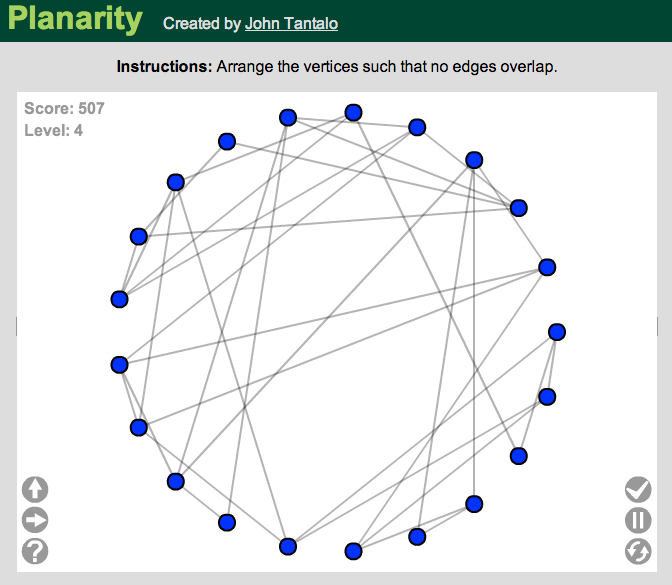}
\vskip-1.5ex
\caption{Initial state of Planarity.}
\label{fig:planarity-input}
\end{wrapfigure}

\emph{Planarity} (\url{http://planarity.net/}) is a puzzle  developed by John Tantalo and Mary Radcliffe in which the user moves the vertices of a planar graph, starting from a tangled circular layout (Figure~\ref{fig:planarity-input}), into a position where its edges (drawn as straight line segments) do not have any crossings. The game is played in a sequence of levels, of increasing difficulty. To construct the graph for the $i$th level of the game, the game applet chooses $\ell=i+3$ random lines in general position in the plane. It creates a vertex for each of the  $\ell(\ell-1)/2$ crossings of two lines in this arrangement, and an edge for each of the $\ell(\ell-2)$ pairs of  crossings that are consecutive on the same line. 

\begin{figure}[t]
\centering\includegraphics[scale=0.2]{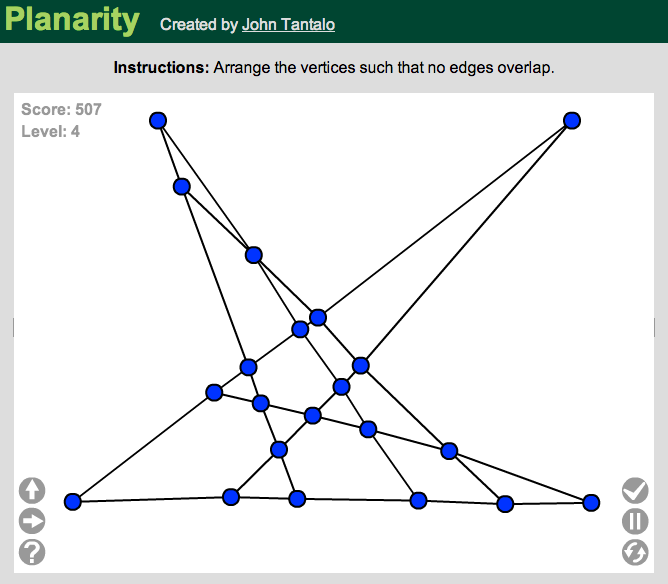}\qquad
\includegraphics[scale=0.2]{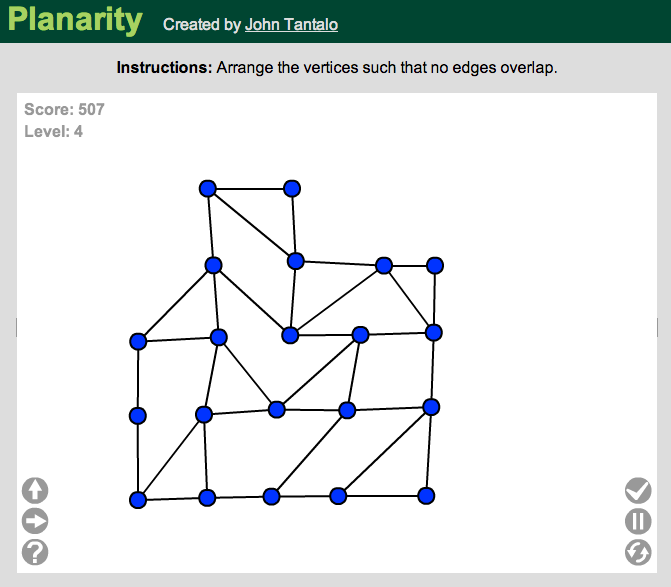}
\caption{Two manually constructed solutions to the puzzle from Figure~\ref{fig:planarity-input}. Left:  a set of lines with this graph as its arrangement. Right: an (approximate) grid layout.}
\label{fig:planarity-solved}
\vskip-2ex
\end{figure}

One strategy for solving Planarity would be to reconstruct a set of lines forming the given graph (Figure~\ref{fig:planarity-solved}, left). However, this is tedious to do by hand, and has high computational complexity: testing whether an arrangement of curves can be stretched to form a combinatorially equivalent line arrangement is NP-hard~\cite{Sho-VKF-91}, from which it follows that recognizing the graphs of line arrangements is also NP-hard~\cite{BosEveWis-IJCGA-03}. More precisely, both problems are complete for the existential theory of the reals~\cite{Sch-GD-09}. And although drawings constructed in this way accurately convey the underlying construction of the graph, they have low angular resolution (at most $\pi/\ell$) and close vertex spacing, making them hard to read and hard to place accurately by hand.
Instead, in practice these puzzles may be solved more easily by an incremental embedding strategy that maintains a planar embedding of a subgraph of the input, starting from a single short cycle (such as a triangle or quadrilateral), and that at each step extends the embedding by a single face, bounded by a short path connecting two  vertices on the boundary of the previous embedding. When using this strategy to solve a Planarity puzzle, the planar embedding may be kept tidy by placing each vertex into an approximate grid (Figure~\ref{fig:planarity-solved}, right). Curiously, the grid drawings found by this incremental grid-placement heuristic appear to have near-linear area; in contrast, there exist planar graphs such as the \emph{nested triangles graph} that cannot be drawn planarly in a grid of less than $\Theta(n^2)$ area~\cite{DolLeiTri-ACR-84,Val-TC-81}.

In this paper we explain this empirical finding of small grid area by developing an efficient algorithm for constructing compact grid drawings of the arrangement graphs arising in Planarity. Because recognizing line arrangement graphs is NP-hard, we identify a larger family of planar graphs (the graphs of simple pseudoline arrangements) that may be recognized and decomposed into their constituent pseudolines in linear time.  We show that every $n$-vertex simple pseudoline arrangement graph may be drawn in linear time in a grid of size $\maxklevel(O(\sqrt n))\times O(\sqrt n)$; here $\maxklevel(\ell)$ is
 the maximum complexity of a $k$-level of a pseudoline arrangement with $\ell$ pseudolines~\cite{KlaPatPip-82,TamTok-Algo-03,ShaSmo-WADS-03}, a topological variant of the famous $k$-set problem from discrete geometry (see Section~\ref{sec:grid} for a formal definition). The best proven upper bounds of $O(\ell\,^{4/3})$ on the complexity of $k$-levels~\cite{Dey-DCG-98,TamTok-Algo-03,ShaSmo-WADS-03} imply that the grid in which our algorithm draws these graphs has size $O(n^{2/3})\times O(\sqrt n)$ and area $O(n^{7/6})$. However, all known lower bounds on $k$-level complexity are of the form $O(\ell^{1+\epsilon})$ for all $\epsilon>0$~\cite{KlaPatPip-82,Tot-DCG-01}, suggesting that our algorithm is likely to perform even better in practice than our worst-case bound. If we could find a constant $\epsilon>0$ and a family of inputs that would cause our algorithm to use area $\Omega(\ell^{1+\epsilon})$, such a result would represent significant progress on the $k$-set problem.

We also investigate the problem of constructing \emph{universal point sets} for arrangement graphs, sets of points that can be used as the vertices for a straight-line planar drawing of every $n$-vertex arrangement graph. Our construction directly provides a universal point set consisting of $O(n^{7/6})$ grid points; we show how to sparsify this structure, leading to the construction of a universal set of $O(n\log n)$ points in the form of a subset of a grid whose dimensions are again $O(n^{2/3})\times O(\sqrt n)$.

Finally, we formalize and justify an algorithm for manual solution of these puzzles that greedily finds short cycles and adds them as faces to a partial planar embedding. Although this algorithm may fail for general planar graphs, we show that for arrangement graphs it always finds a planar embedding that is combinatorially equivalent to the original arrangement.

\section{Preliminaries}

\begin{wrapfigure}[16]{r}{0.43\textwidth}
\vskip-4.5ex
\centering\includegraphics[scale=0.2]{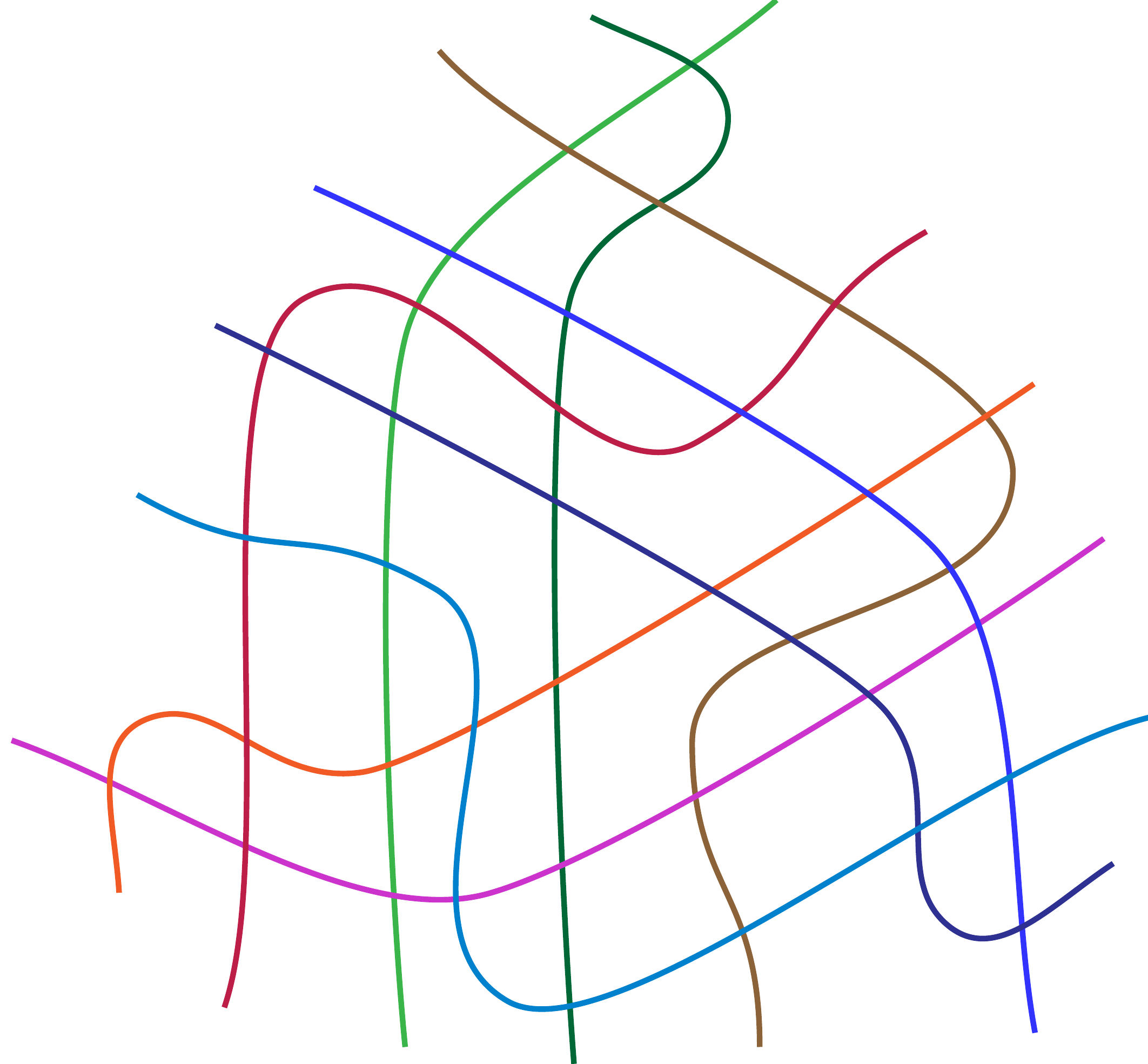}
\vskip-1.5ex
\caption{A simple pseudoline arrangement that cannot be transformed into a line arrangement. Redrawn from Figure 5.3.2 of \cite{Goo-HDCG-97}, who attribute this arrangement to Ringel.}
\label{fig:unstretchable}
\end{wrapfigure}

Following Shor~\cite{Sho-VKF-91}, we define a \emph{pseudoline} to be the image of a line under a homeomorphism of the Euclidean plane. Pseudolines include lines, non-self-crossing polygonal chains starting and ending in infinite rays, and the graphs of continuous real functions. A \emph{crossing} of two pseudolines is a point $x$ such that a neighborhood of $x$ can be mapped homeomorphically to a neighborhood of the crossing point of two lines, with the map taking the pseudolines to the lines within this neighborhood. An \emph{arrangement} of pseudolines is a finite set of pseudolines, the intersection of every two of which is a single crossing point. An arrangement is \emph{simple} if all pairs of pseudolines have distinct crossing points. A \emph{pseudoline arrangement graph} is a planar graph whose vertices correspond to the crossings in a simple pseudoline arrangement, and whose edges connect consecutive crossings on a common pseudoline.

Most of the ideas in the following result are from Bose et al.~\cite{BosEveWis-IJCGA-03},
but we elaborate on that paper to show that linear time recognition of arrangement graphs is possible. (See~\cite{Epp-GD-04} for a more complicated linear time algorithm that recognizes the dual graphs of a wider class of arrangement graphs, the graphs of \emph{weak} pseudoline arrangements in which pairs of pseudolines need not cross)

\begin{lemma}
\label{lem:test-arrangement}
If we are given as input a graph $G$, then in linear time we can determine whether it is a pseudoline arrangement graph, determine its (unique) embedding as an arrangement graph, and find a pseudoline arrangement for which it is the arrangement graph.
\end{lemma}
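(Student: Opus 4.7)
My plan is to implement each step of the Bose--Everett--Wismath polynomial-time recognition scheme in linear time. I would first perform simple counting checks: a simple pseudoline arrangement graph on $\ell$ pseudolines has $\ell(\ell-1)/2$ vertices and $\ell(\ell-2)$ edges, so $\ell$ is determined by $|V(G)|$ and must be consistent with $|E(G)|$. Next I would run the Hopcroft--Tarjan planarity algorithm to obtain a combinatorial planar embedding in linear time, rejecting $G$ if it is non-planar, and dispatch the degenerate cases $\ell\le 3$ directly.

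For $\ell\ge 4$ the graph should be 3-connected, which I would verify in linear time so that by Whitney's theorem its planar embedding is unique up to reflection. The central combinatorial observation is that a crossing is extremal on at least one of its two pseudolines precisely when the corresponding vertex has degree 2 or 3, and these are exactly the vertices that lie on the boundary of the outer face; conversely, at a degree-4 vertex both pseudolines continue through the crossing, so every locally incident face is a bounded cell. This lets me identify the outer face as the unique face whose boundary contains every vertex of degree less than 4. Using the outer face I would pair the edges at each vertex into ``through-line'' pairs belonging to a common pseudoline: opposite edges in the cyclic rotation at every degree-4 vertex, the two edges adjacent to the outer face at every degree-3 vertex (the remaining edge being a terminating pseudoline), and the two edges at every degree-2 vertex terminating distinct pseudolines. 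Walking these pairings from the degree-2 and degree-3 endpoints produces a collection of candidate pseudolines in linear time. I would then verify that there are exactly $\ell$ such simple paths and that every pair of them shares exactly one vertex, both checkable in linear time by marking vertex--pseudoline incidences.

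The main difficulty is showing that any graph passing these combinatorial tests really is a pseudoline arrangement graph, since the analogous stretchability question for line arrangements is NP-hard. The forward implication is immediate from the definitions. For the converse, once the extracted pseudolines satisfy the arrangement axioms I can read off from $G$ the sequence of transpositions induced on each pseudoline and realize these as a wiring diagram with $\ell$ horizontal wires: because $\ell=O(\sqrt n)$, constructing and outputting the wiring diagram takes $O(\ell^2)=O(n)$ time, simultaneously certifying membership and providing the required arrangement. The most delicate point is that the pairing rule at degree-3 vertices depends on correctly identifying the outer face, so the degree characterization of outer-face vertices (and hence 3-connectivity in the non-trivial range) is what holds the argument together.
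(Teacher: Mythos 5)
Your overall architecture (counting checks, planarity test, pairing opposite edges into pseudoline paths, verifying the pairwise-crossing conditions) matches the paper's, but there is a genuine gap at exactly the step you identify as ``what holds the argument together'': the claim that $G$ is 3-connected for $\ell\ge 4$ is false. Every simple arrangement contains a crossing that is extremal on both of its pseudolines (for instance the first crossing encountered by a sweep), so $G$ always has a degree-2 vertex and is only 2-connected; for $\ell=4$ one even has $6$ vertices and only $8<\lceil 3\cdot 6/2\rceil$ edges. Consequently Whitney's theorem does not apply to $G$, the combinatorial embedding returned by the planarity algorithm need not be the canonical one, and your ``opposite edges in the cyclic rotation'' pairing at degree-4 vertices is not well defined. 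Your fallback characterization of the outer face also fails: a degree-4 vertex can border an unbounded cell of the arrangement (take two crossing lines and two nearly horizontal lines cutting across both; the central degree-4 crossing is incident to an unbounded cell on its left), so the outer face is not determined by degree counts in the way you assert. The paper's missing idea is precisely the augmentation: add a vertex $v_\infty$ adjacent to every vertex of degree less than four (with a double edge to each degree-2 vertex); Bose et al.\ show the resulting $G^*$ is 3-connected and planar, so \emph{its} embedding is unique and restricts to the canonical embedding of $G$, after which the opposite-edge pairing is canonical at every vertex.

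A second, smaller gap is in the soundness direction. After the combinatorial checks you propose to ``read off the sequence of transpositions'' and ``realize these as a wiring diagram,'' but a wiring diagram requires a consistent global left-to-right ordering of all crossings (an allowable sequence), which does not follow merely from each pair of paths meeting exactly once; obtaining such an order is itself nontrivial (the paper needs topological sweeping for it in Section~3). For recognition the paper avoids this entirely: it takes the embedding of $G^*$ on the sphere, punctures it at $v_\infty$, and lets the embedded paths themselves be the pseudolines. You also omit two checks the paper performs before the pairwise-intersection count is meaningful, namely that no path of the decomposition forms a cycle avoiding $v_\infty$ and that no path crosses itself.
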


\begin{proof}
Let $G^*$ be formed from a pseudoline arrangement graph $G$ by adding a new vertex $v_\infty$ adjacent to all vertices in $G$ of degree less than four. As Bose et al.~\cite{BosEveWis-IJCGA-03} show, $G^*$ is 3-connected and planar, and its unique planar embedding is compatible with the embedding of $G$ as an arrangement graph. For convenience we include two edges in $G^*$ from $v_\infty$ to each degree two vertex in $G$, so that, in $G^*$, all vertices except $v_\infty$ have degree four. With this modification, the pseudolines of the arrangement for $G$ are represented in $G^*$ by paths starting and ending at $v_\infty$ that, at each other vertex, connect two opposite edges in the embedding.

For any given graph $G$ of maximum degree four we may, in linear time, add a new vertex $v_\infty$, test planarity of the augmented graph $G^*$, and embed $G^*$ in the plane. The edge partition of $G^*$ into paths through opposite edges at each degree four vertex may be found in linear time by connected component analysis. By labeling each edge with the identity of its path, we may verify that this partition does not include cycles disjoint from $v_\infty$ and that no path crosses itself. We additionally check that $G$ has $\ell(\ell-1)/2$ vertices, where $\ell$ is the number of paths. Finally, by listing the pairs of paths passing through each vertex and bucket sorting this list, we may verify in linear time that no two paths cross more than once. If $G$ passes all of these checks, its decomposition into paths gives a valid pseudoline arrangement, which may be constructed by viewing the embedding of $G^*$ as being on a sphere, puncturing the sphere at point $v_\infty$, and homeomorphically mapping the punctured sphere to the plane. 
\end{proof}

\section{Small Grids}
\label{sec:grid}

\begin{figure}[b]
\centering\includegraphics[width=4in]{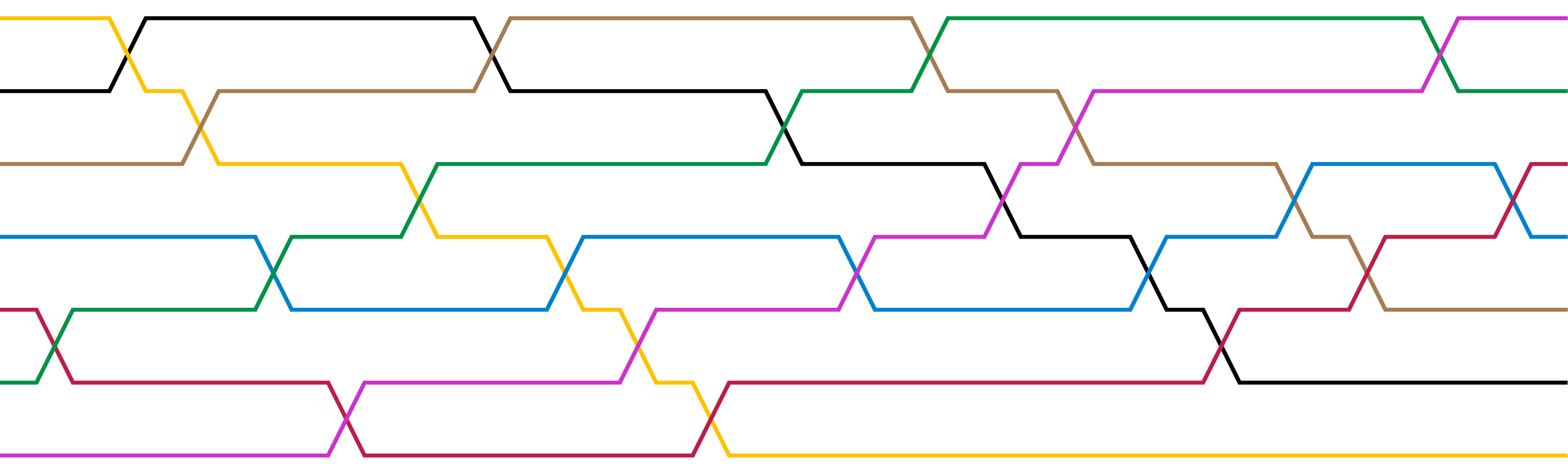}
\caption{A wiring diagram formed by a plane sweep of the arrangement from Figure~\ref{fig:planarity-solved}}
\label{fig:wiring-diagram}
\end{figure}

To describe our grid drawing algorithm for pseudoline arrangement graphs, we need to introduce the concept of a \emph{wiring diagram}. A wiring diagram is a particular kind of pseudoline arrangement, in which the $\ell$ pseudolines largely lie on $\ell$ horizontal lines (with coordinates $y=1$, $y=2$, $\dots$, $y=\ell$).  The pseudolines on two adjacent tracks may cross each other, swapping which track they lie on, near points with coordinates $x=1$, $x=2$, $\dots$, $x=\ell(\ell-1)/2$; each crossing is formed by removing two short segments of track and replacing them by two crossing line segments between the tracks. It is convenient to require different crossings to have different $x$ coordinates, following Goodman~\cite{Goo-DM-80}, although some later sources omit this requirement. Figure~\ref{fig:wiring-diagram} depicts an example. Wiring diagrams already provide reasonably nice grid drawings of arrangement graphs~\cite{MutSahPat-01}, but are unsuitable for our purposes, for two reasons: they draw the edges connecting pairs of adjacent crossings as polygonal chains with two bends, and for some arrangements,  even allowing crossings to share $x$-coordinates, drawing the wiring diagram of $\ell$ lines in a grid may require width $\Omega(\ell^2)$ (Figure~\ref{fig:wide-wiring}), much larger than our bounds. Instead, we will use these diagrams as a tool for constructing a different and more compact straight-line drawing.

\begin{figure}[t]
\centering\includegraphics[width=4in]{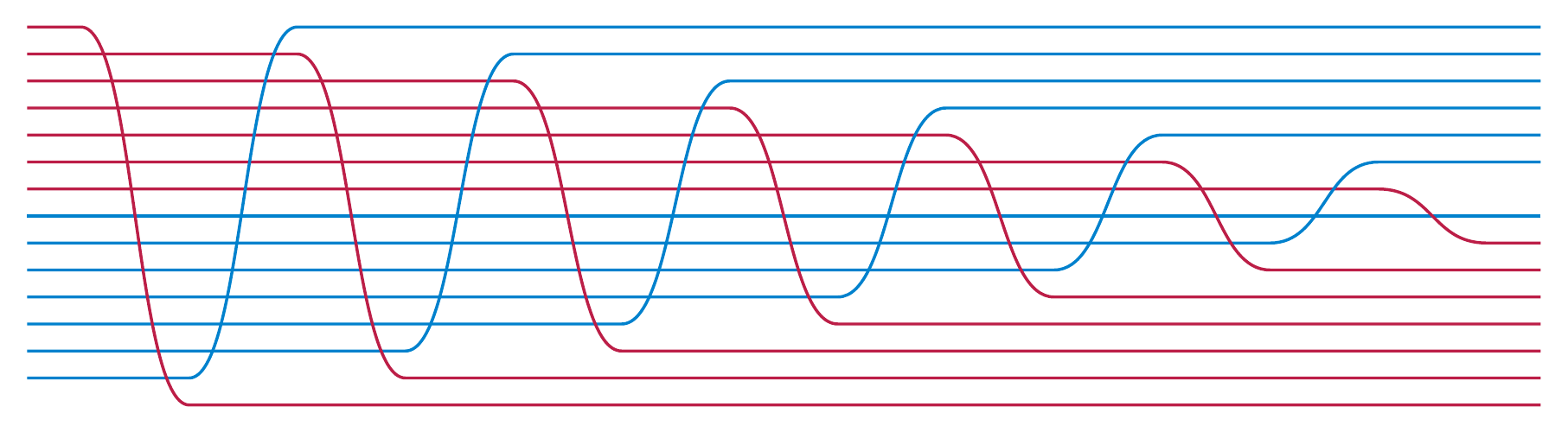}
\caption{Cocktail shaker sort corresponds to an arrangement of $\ell$ pseudolines for which drawing the wiring diagram in a grid requires width $\Omega(\ell^2)$}
\label{fig:wide-wiring}
\end{figure}

For an arrangement of non-vertical lines in general position, an equivalent wiring diagram may be constructed by a \emph{plane sweep} algorithm~\cite{BenOtt-IEEETC-79}, which simulates the left-to-right motion of a vertical line across the arrangement. At most points in the sweep, the intersection points of the arrangement lines with the sweep line maintain a fixed top-to-bottom order with each other, with their positions in this order reflected in the assignment of the corresponding pseudolines to tracks. When the sweep line crosses a vertex of the arrangement, two intersection points swap positions in the top-to-bottom order, corresponding to a crossing in the wiring diagram. The left-to-right order of crossings in the wiring diagram is thus exactly the sorted order of the crossing points of the arrangement, as sorted by their $x$ coordinates. The wiring diagram in Figure~\ref{fig:wiring-diagram} was constructed in this way from the approximate line arrangement depicted in Figure~\ref{fig:planarity-input}.

Every simple pseudoline arrangement, also, has an equivalent wiring diagram, that may be constructed in time linear in its number of crossings. The proof of this fact uses \emph{topological sweeping}, a variant of  plane sweeping originally developed to speed up sweeping of straight line arrangements by relaxing the strict left-to-right ordering of the crossing points~\cite{EdeGui-JCSS-89}, that can also be extended to apply to pseudoline arrangements~\cite{SnoHer-SCG-89}. The steps of the topological sweeping algorithm require only determining the relative ordering of crossings along each of the input pseudolines, something that may easily be determined from our path decomposition of a pseudoline arrangement graph by precomputing the position of each crossing on each of the two pseudolines it belongs to.

We define the $i$th \emph{level} $L_{\mathcal D}(i)$ in a wiring diagram $\mathcal D$ to be the set of crossings that occur between tracks $i$ and $i+1$. A crossing belongs to $L_{\mathcal D}(i)$ if and only if $i-1$ lines pass between it and the bottom face of the arrangement (the face below all of the tracks in the wiring diagram); therefore, once this bottom face is determined, the levels are fixed by this choice regardless of how the crossings are ordered to form a wiring diagram. If we define the size $|{\mathcal D}|$ of a diagram to be its number of pseudolines, and the level complexity $\klevel({\mathcal D})$ to be $\max_i |L_{\mathcal D}(i)|$, then it is a longstanding open problem in discrete geometry (a variant of the $k$-set problem) to determine the maximum level complexity of an arrangement of $\ell$ pseudolines,
$\maxklevel(\ell)=\max_{|{\mathcal D}|=\ell} \klevel({\mathcal D})$.
The known bounds on this quantity are
$\maxklevel(\ell)=O(\ell^{4/3})$~\cite{Dey-DCG-98,TamTok-Algo-03,ShaSmo-WADS-03},
and
$\maxklevel(\ell)=\Omega(\ell\,c^{\sqrt{\log\ell}})$ for some constant $c>1$~\cite{KlaPatPip-82,Tot-DCG-01},
where the last bound is $O(n^{1+\epsilon})$ for all constants $\epsilon>0$.

\begin{theorem}
\label{thm:draw}
Let $G$ be a pseudoline arrangement graph with $n$ vertices, determined by $\ell=\Theta(\sqrt n)$ pseudolines. Then in time $O(n)$ we may construct a planar straight-line drawing of $G$, in a grid of size $(\ell-1)\times\maxklevel(\ell)=O(n^{1/2})\times O(n^{2/3})$.
\end{theorem}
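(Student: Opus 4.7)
The plan is to convert $G$ into an equivalent wiring diagram $\mathcal{D}$ and then compress it vertically by placing every crossing at a $y$-coordinate equal to its level. By Lemma~\ref{lem:test-arrangement} we recover the $\ell$ pseudolines of $G$ in linear time, and a topological sweep then produces $\mathcal{D}$ in linear time. A left-to-right scan of $\mathcal{D}$ assigns each crossing $v$ both its level $\lambda(v)\in\{1,\dots,\ell-1\}$ and its rank $r(v)$ within $L_{\mathcal{D}}(\lambda(v))$ in wiring order, via a single bucket sort. Placing $v$ at the grid point $(r(v),\lambda(v))$ and drawing each edge as a straight segment then yields a drawing on $\ell-1$ rows and at most $\maxklevel(\ell)$ columns, which is the advertised bound.

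The structural fact driving planarity is that between two consecutive crossings $c,c'$ on a pseudoline $p$, the pseudoline $p$ occupies a single track $j$ of $\mathcal{D}$; each of $c,c'$ therefore lies between tracks $j-1,j$ or between tracks $j,j+1$, so $\lambda(c),\lambda(c')\in\{j-1,j\}$ and in particular $|\lambda(c)-\lambda(c')|\le 1$. Every edge of $G$ is thus either \emph{horizontal} (endpoints on a common level) or \emph{diagonal} (endpoints on adjacent levels). A horizontal edge of $p$ at level $i$ must join crossings consecutive in the wiring order of $L_{\mathcal{D}}(i)$, because any intervening level-$i$ crossing would involve the track $p$ currently occupies and hence would itself be a crossing of $p$. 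Horizontal edges therefore appear in the drawing as unit-length segments whose interiors are pairwise disjoint.

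The main obstacle is ruling out crossings among diagonal edges. For a diagonal $e=(a,b)$ carried by $p$ with $a\in L_{\mathcal{D}}(i+1)$ and $b\in L_{\mathcal{D}}(i)$, the unique track shared by the two crossings is track $i+1$, so $p$ sits on track $i+1$ throughout the wiring-diagram $x$-interval $I_e$ bounded by $a$ and $b$. Since at every $x$-coordinate exactly one pseudoline occupies track $i+1$, the intervals $I_e$ associated with distinct diagonals between levels $i$ and $i+1$ are pairwise disjoint. Consequently, if $I_e$ lies to the left of $I_{e'}$ in $\mathcal{D}$, then $r(a)<r(a')$ and $r(b)<r(b')$, so the straight-line diagonals do not cross. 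Diagonals confined to non-adjacent level pairs are vertically separated, and a horizontal edge at level $i$ together with any diagonal touching level $i$ can meet only at a shared vertex, since the horizontal is a unit segment whose open interior contains no integer rank other than its own endpoints. These cases exhaust all pairs of edges, so the drawing is planar, and every step of the construction plainly runs in $O(n)$ time.
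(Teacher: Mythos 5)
Your proof is correct and follows essentially the same route as the paper: recover the pseudolines via Lemma~\ref{lem:test-arrangement}, build a wiring diagram by topological sweep, place each crossing at (rank within level, level), and derive planarity from the consistent left-to-right order of the track-$(i{+}1)$ segments carried by the edges between levels $i$ and $i+1$ (you actually justify the step the paper states without proof, namely that consecutive crossings on a pseudoline differ by at most one level). The only blemish is that your closing case analysis omits pairs of diagonals between \emph{adjacent} level pairs $(i,i+1)$ and $(i+1,i+2)$; these meet the shared line $y=i+1$ only at integer vertex positions, so they are disposed of exactly as in your horizontal--diagonal case.
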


\begin{proof}
We find a decomposition of $G$ into pseudoline paths, by the algorithm of Lemma~\ref{lem:test-arrangement}, and use topological sweeping to convert this decomposition into a wiring diagram. We place each vertex $v$ of $G$ at the coordinates $(i,j)$, where $i$ is the position of $v$ within its level of the wiring diagram and $j$ is the number of tracks below its level of the wiring diagram.

With this layout, every edge of $G$ either connects consecutive vertices within the same level as each other, or it connects vertices on two consecutive levels. In the latter case, each edge between two consecutive levels corresponds to a horizontal segment of the wiring diagram that lies on the track between the two levels; the left-to-right ordering of these horizontal segments is the same as the left-to-right ordering of both the lower endpoints and the upper endpoints of these edges. Because of this consistent ordering of endpoints, no two edges between the same two consecutive levels can cross. There can also not be any crossings between edges that do not both lie in the same level or connect the same two consecutive levels. Therefore, the drawing we have constructed is planar. By construction, it has the dimensions given in the theorem.
\end{proof}

\begin{wrapfigure}[12]{r}{0.38\textwidth}
\vskip-6ex
\centering\includegraphics[scale=0.47]{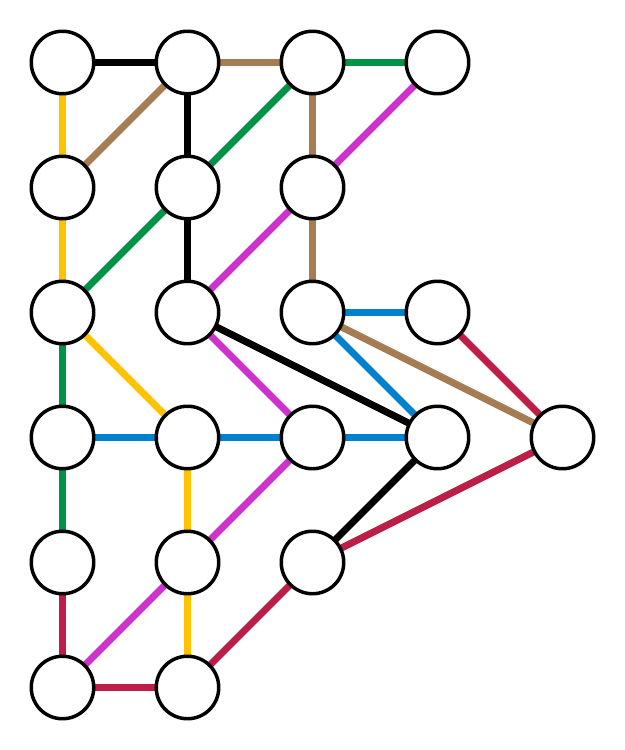}
\vskip-2ex
\caption{Output of the drawing algorithm of Theorem~\ref{thm:draw}, based on the wiring diagram of Figure~\ref{fig:wiring-diagram}}
\label{fig:algorithm-output}
\end{wrapfigure}
Figure~\ref{fig:algorithm-output} depicts the output of our algorithm, using the wiring diagram of Figure~\ref{fig:wiring-diagram}, for the graph of Figure~\ref{fig:planarity-input}.
Although the $6\times 5$ grid found by this algorithm is not quite as compact as the manually-found $5\times 5$ grid of Figure~\ref{fig:planarity-solved}, it is much smaller than standard grid drawings that do not take advantage of the arrangement structure of this graph. A more careful placement of vertices within each row would improve the angular resolution and edge length of the drawing but we have omitted this step in order to make the construction more clear.

\section{Universal Point Sets}

A \emph{universal point set} for the $n$-vertex graphs in a class $\mathcal C$ of graphs is a set $U_n$ of points in the plane such that every $n$-vertex graph in $C$ can be drawn with its vertices in $U_n$ and with its edges drawn as non-crossing straight line segments~\cite{ChrKar-SN-89}.
Grids of $O(n)\times O(n)$ points form universal sets of quadratic size for the planar graphs~\cite{FraPacPol-STOC-88,Sch-SODA-90}, and despite very recent improvements to the constant factor in this quadratic size bound~\cite{BanCheDev-13} this remains the best known upper bound. A rectangular grid that is universal must have $\Omega(n^2)$ points~\cite{DolLeiTri-ACR-84,Val-TC-81}; the best known lower bounds for universal point sets that are not required to be grids are only linear~\cite{ChrKar-SN-89}.

Subquadratic bounds are known on the size of universal point sets for subclasses of the planar graphs including the outerplanar graphs~\cite{GriMohPol-AMM-1991}, simply-nested planar graphs~\cite{AngDibKau-GD-2012,BanCheDev-13}, planar 3-trees~\cite{FulTot-WADS-2013}, and graphs of bounded pathwidth~\cite{BanCheDev-13}; however, these results do not apply to arrangement graphs. The grid drawing technique of Theorem~\ref{thm:draw} immediately provides a universal point set for arrangement graphs of size $O(n^{7/6})$; in this section we significantly improve this bound, while only increasing the area of our drawings by a constant factor.

Following Bannister et al.~\cite{BanCheDev-13}, define a sequence of positive integers $\xi_i$ for $i=1,2,3,\dots$ by the equation $\xi_i=i\oplus(i-1)$ where $\oplus$ denotes the bitwise binary exclusive or operation. The sequence of these values begins
$$1,3,1,7,1,3,1,15,1,3,1,7,1,3,1,\dots.$$

\begin{lemma}[Bannister et al.~\cite{BanCheDev-13}]
\label{lem:sawtooth}
Let the finite sequence $\alpha_1,\alpha_2,\dots \alpha_k$ have sum $s$. Then there is a subsequence $\beta_1,\beta_2,\dots \beta_k$ of the first $s$ terms of $\xi$ such that, for all $i$, $\alpha_i\le \beta_i$. The sum of the first $s$ terms of $\xi$ is between $s\log_2 s-2s$ and $s\log_2 s+s$.
\end{lemma}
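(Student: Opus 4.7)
The plan divides into two essentially independent parts.

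For the existence of the dominating subsequence, I would use a greedy pointer argument. Scanning the $\alpha_i$ in order, maintain a pointer $p$ in $\xi$ (initially $0$); for each $\alpha_i$, advance $p$ to the smallest index $j>p$ with $\xi_j\ge\alpha_i$, and set $\beta_i=\xi_j$. The closed form $\xi_j=2^{v_2(j)+1}-1$, where $v_2(j)$ denotes the number of trailing zeros of $j$ in binary, shows that $\xi_j\ge 2^m-1$ whenever $2^{m-1}$ divides $j$. Taking $m=\lceil\log_2(\alpha_i+1)\rceil$ gives both $\alpha_i\le 2^m-1$ and $2^{m-1}\le\alpha_i$, so the next multiple of $2^{m-1}$ beyond $p$ lies at distance at most $\alpha_i$. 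Summing the advancements yields a final pointer value of at most $\sum_i\alpha_i=s$, so all chosen indices fall within the first $s$ positions of $\xi$.

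For the sum estimate, I would use the same closed form to write $\sum_{i=1}^s\xi_i=2\sum_{i=1}^s 2^{v_2(i)}-s$. Grouping the terms by the value of $v_2(i)$ and using $\#\{i\le s:v_2(i)=j\}=\lfloor s/2^j\rfloor-\lfloor s/2^{j+1}\rfloor$, a telescoping rearrangement collapses the double sum to
\[
\sum_{i=1}^s 2^{v_2(i)} \;=\; s+\sum_{j=1}^{\lfloor\log_2 s\rfloor} 2^{j-1}\lfloor s/2^j\rfloor.
\]
Replacing each floor by $s/2^j$ turns the inner sum into $\lfloor\log_2 s\rfloor\cdot s/2$, with total rounding error at most $\sum_{j\ge 1}2^{j-1}<s$. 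Doubling and subtracting $s$ gives $\sum_{i=1}^s\xi_i\in[(L-1)s,(L+1)s]$ for $L=\lfloor\log_2 s\rfloor$; since $\log_2 s-1\le L\le\log_2 s$, this interval is contained in $[s\log_2 s-2s,\,s\log_2 s+s]$ as claimed.

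Neither half is conceptually difficult; the only friction is the rounding bookkeeping in the second part. If that gets messy I would instead establish the exact identity $\sum_{i=1}^{2^k}\xi_i=(k+1)2^k$ by induction on $k$ (splitting the indices of the first $2^k$ terms into odd and even blocks so that the even block is a shifted copy of the first $2^{k-1}$ terms) and then interpolate to arbitrary $s$ using monotonicity of the partial sums together with the trivial bound $\xi_i\le 2i$, which already yields the stated $O(s)$-additive bracket around $s\log_2 s$.
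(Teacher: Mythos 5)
Your proof is correct, but note that the paper itself offers no proof of this lemma: it is imported verbatim from Bannister et al.~\cite{BanCheDev-13}, so there is no internal argument to compare against. Judged on its own, your write-up is sound. The closed form $\xi_j=2^{v_2(j)+1}-1$ is right, and the greedy pointer argument works: with $m=\lceil\log_2(\alpha_i+1)\rceil$ you get both $2^m-1\ge\alpha_i$ and $2^{m-1}\le\alpha_i$, so the next multiple of $2^{m-1}$ past $p$ (at which $\xi\ge 2^m-1$) lies within $\alpha_i$ positions, and the pointer never exceeds $\sum\alpha_i=s$. The only caveat is that the step $2^{m-1}\le\alpha_i$ needs $\alpha_i\ge 1$; zero terms are trivially dominated, so this is cosmetic, but worth a sentence. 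The sum estimate also checks out: grouping by $v_2$ and telescoping gives $\sum_{i\le s}2^{v_2(i)}=s+\sum_{j\ge1}2^{j-1}\lfloor s/2^j\rfloor$, the floor-removal error is under $2^{L}\le s$ with $L=\lfloor\log_2 s\rfloor$, and doubling and subtracting $s$ lands the total in $((L-1)s,(L+1)s]$, which sits inside $[s\log_2 s-2s,\,s\log_2 s+s]$. Your fallback via the exact identity $\sum_{i=1}^{2^k}\xi_i=(k+1)2^k$ is also a clean route and is closer in spirit to how such sums are usually handled, though the interpolation to general $s$ with the stated additive constants would need the same kind of bookkeeping you were trying to avoid.
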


Recall that the grid drawing technique of Theorem~\ref{thm:draw} produces a drawing in which the vertices are organized into $\ell-1$ rows of at most $\maxklevel(\ell)=O(\ell^{4/3})$ vertices per row, where $\ell=O(\sqrt n)$ is the number of lines in the underlying $n$-vertex arrangement. In this drawing, suppose that there are $n_i$ vertices on the $i$th row of the drawing, and define a sequence $\alpha_i=\lceil n_i/\ell\rceil$.

\begin{lemma}
$\sum\alpha_i\le 3(\ell-1)/2$.
\end{lemma}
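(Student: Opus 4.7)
The plan is to show this with elementary arithmetic: bound each ceiling by its argument plus a fractional slack, then sum using the fact that the $n_i$ already have a known total.

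First I would fix the basic counts. The drawing of Theorem~\ref{thm:draw} has $\ell-1$ rows (one for each level $L_{\mathcal D}(i)$ of the wiring diagram), and the total vertex count over all rows is $\sum_{i=1}^{\ell-1} n_i = n = \binom{\ell}{2} = \ell(\ell-1)/2$, since every crossing of the arrangement belongs to exactly one level.

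Next I would apply the standard inequality $\lceil m/\ell \rceil \le m/\ell + (\ell-1)/\ell$, valid for every nonnegative integer $m$ (the gap $\lceil m/\ell\rceil - m/\ell$ equals $(\ell-r)/\ell$ when $m\equiv r\pmod{\ell}$ with $r\ge 1$, and equals $0$ when $r=0$). Summing this over the $\ell-1$ rows gives
\[
\sum_{i=1}^{\ell-1}\alpha_i \;\le\; \frac{1}{\ell}\sum_{i=1}^{\ell-1} n_i \;+\; (\ell-1)\cdot\frac{\ell-1}{\ell} \;=\; \frac{\ell-1}{2} + \frac{(\ell-1)^2}{\ell}.
\]

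Finally I would simplify: the right-hand side equals $(\ell-1)(3\ell-2)/(2\ell)$, which is at most $3(\ell-1)/2$ since $(3\ell-2)/\ell\le 3$. This yields the claimed bound. There is no real obstacle here; the only thing to be careful about is handling the ceiling cleanly, and noting that the slack term $(\ell-1)/\ell$ per row, summed over $\ell-1$ rows, contributes almost exactly the missing $(\ell-1)/2$ beyond the trivial average $n/\ell = (\ell-1)/2$, so the constant $3/2$ is essentially tight for this argument.
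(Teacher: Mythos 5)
Your proof is correct and follows essentially the same route as the paper: the paper bounds $\alpha_i$ by $\lfloor n_i/\ell\rfloor$ full groups plus at most one partial group per row, which is exactly your decomposition of the ceiling into $n_i/\ell$ plus a per-row slack of at most one, combined with $\sum n_i = \ell(\ell-1)/2$. No issues.
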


\begin{proof}
We may partition the $n_i$ vertices in the $i$th row $n_i$ into $\lfloor n_i/\ell\rfloor$ groups of exactly $\ell$ vertices, together with at most one smaller group; then $\alpha_i$ is the number of groups. The contribution to $\sum\alpha_i$ from the groups of exactly $\ell$ vertices is at most $n/\ell=(\ell-1)/2$. There is at most one smaller group per row so the contribution from the smaller groups is at most $\ell-1$. Thus the total value of the sum is at most $3(\ell-1)/2$.
\end{proof}

\begin{theorem}
There is a universal point set of $O(n\log n)$ points for the $n$-vertex arrangement graphs,
forming a subset of a grid of dimensions $O(\ell)\times \maxklevel(\ell)$.
\end{theorem}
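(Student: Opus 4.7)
The plan is to extend the row-structured grid drawing of Theorem~\ref{thm:draw} into a universal point set by providing a fixed collection of ``spare'' rows of varying widths drawn from the $\xi$-sawtooth sequence of Bannister et al., so that for any particular arrangement graph Lemma~\ref{lem:sawtooth} supplies an order-preserving selection of $\ell-1$ of these universal rows into which the $\ell-1$ rows of its Theorem~\ref{thm:draw} drawing can be embedded. The bound $\sum\alpha_i\le 3(\ell-1)/2$ from the preceding lemma is precisely what allows the sawtooth lemma to fire.

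Set $s=3(\ell-1)/2$ and $K=\lceil\maxklevel(\ell)/\ell\rceil$. The universal set sits in a grid of height $s$ and width $\maxklevel(\ell)$: in the $j$th row I would place $\min(\xi_j,K)\cdot\ell$ points at any fixed increasing column positions (for concreteness, the leftmost columns). The cap at $K$ is what keeps every row within the claimed width. Counting, the total number of points is at most $\ell\sum_{j=1}^{s}\xi_j\le \ell(s\log_2 s+s)=O(\ell^2\log\ell)=O(n\log n)$ by Lemma~\ref{lem:sawtooth} and $\ell=\Theta(\sqrt n)$.

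To embed a given arrangement graph $G$, I would first compute its Theorem~\ref{thm:draw} drawing, read off the row sizes $n_i$, and set $\alpha_i=\lceil n_i/\ell\rceil$. Lemma~\ref{lem:sawtooth} then supplies indices $j_1<\cdots<j_{\ell-1}$ in $\{1,\ldots,s\}$ with $\xi_{j_i}\ge\alpha_i$. Because $n_i\le\maxklevel(\ell)$ by the definition of level complexity, we have $\alpha_i\le K$, so the capped values $\min(\xi_{j_i},K)$ still dominate the $\alpha_i$; hence the $j_i$th universal row contains at least $\alpha_i\ell\ge n_i$ points. I would then map the $a$th vertex of the $i$th row of $G$ (in pseudoline order) to the $a$th point of the $j_i$th universal row. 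Non-crossing transfers from Theorem~\ref{thm:draw}: within-row edges are horizontal, between-row edges inherit matched left-to-right endpoint orderings since the embedding is row-by-row order-preserving, and edges from different graph-row pairs are vertically separated by the unused universal rows between them.

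The one technical subtlety to get right is the cap at $K$: without it a single universal row could have width $\Theta(\ell^2)$, violating the claimed grid dimensions. The argument that the cap is harmless hinges precisely on the $k$-level bound $n_i\le\maxklevel(\ell)$, which forces each $\alpha_i$ to be no larger than $K$ and hence preserves the sawtooth domination after capping. Beyond that, the proof is a direct assembly of Lemma~\ref{lem:sawtooth}, the preceding lemma on $\sum\alpha_i$, and the drawing of Theorem~\ref{thm:draw}.
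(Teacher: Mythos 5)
Your proposal is correct and follows essentially the same route as the paper: the same grid of $s=3(\ell-1)/2$ rows, the same per-row point count (your $\min(\xi_j,K)\cdot\ell$ is the paper's $\min(\ell\xi_j,\maxklevel(\ell))$ up to an additive $\ell-1$ from the ceiling), the same invocation of Lemma~\ref{lem:sawtooth} via the bound $\sum\alpha_i\le s$, and the same monotone row-by-row embedding argument. Your explicit observation that $n_i\le\maxklevel(\ell)$ makes the cap harmless is the same fact the paper uses implicitly when it writes $\min(\ell\beta_i,\maxklevel(\ell))\ge n_i$.
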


\begin{proof}
Let $s=3(\ell-1)/2$.
We form our universal point set as a subset of an $s\times\maxklevel(\ell)$ grid; the area of the grid from which the points are drawn is exactly 3/2 times the area of the $(\ell-1)$-row grid drawing technique of Theorem~\ref{thm:draw}. In the $i$th row of this grid, we include in our universal point set $\min(\ell\xi_i,\maxklevel(\ell))$ of the grid vertices in that row. It does not matter for our construction exactly which points of the row are chosen to make this number of points.

By Lemma~\ref{lem:sawtooth}, there is a subsequence $\beta_i$ of the first $s$ rows of sequence $\xi$, such that the $\beta$ is termwise greater than or equal to $\alpha$. This subsequence corresponds to a subsequence
$(r_1,r_2,\dots r_{\ell-1})$ of the rows of our universal point set, such that row $r_i$ has at least $\min(\ell\beta_i,\maxklevel(ell))\ge n_i$ points in it.
Mapping the $i$th row of the drawing of Theorem~\ref{thm:draw} to row $r_i$ of this point set will not create any crossings, because the mapping is monotonic within each row and because all edges of the drawing connect pairs of vertices that are either in the same row or in consecutive rows.

The number of points in the point set is $O(\ell s\log s)$ where $s=\sum\alpha_i=O(\ell)$.
Therefore, this number of points is $O(\ell^2\log\ell)=O(n\log n)$.
\end{proof}

\section{Greedy embedding algorithm}

The algorithm of Lemma~\ref{lem:test-arrangement} uses as a subroutine a linear-time planarity testing algorithm. Although such algorithms may be efficiently implemented on computers, they are not really suitable for hand solution of Planarity puzzles.
Instead, it is more effective in practice to build up a planar embedding one face at a time, by repeatedly finding a short cycle in the input graph and attaching it to the previously constructed partial embedding. Here ``short'' means as short as can be found; it is not possible to limit attention to cycles of length three, four, or any fixed bound. For instance in Figure~\ref{fig:unstretchable} the central triangle is separated from the rest of the graph by faces with five sides, and by modifying this example it is possible to separate part of an arrangement graph from the rest of the graph by faces with arbitrarily many sides.
Thus, this hand-solution heuristic may be formalized by the following steps.
\begin{enumerate}
\item Choose an arbitrary starting vertex $v$.
\item Find a cycle $C_1$ of minimum possible length containing $v$.
\item Embed $C_1$ as a simple cycle in the plane.
\item While some of the edges of the input graph have not yet been embedded:
\begin{enumerate}
\item Let $C_i$ be the cycle bounding the current partial embedding. Define an \emph{attachment vertex} of $C_i$ to be a vertex that is incident with edges not already part of the current embedding.
\item Choose two attachment vertices $u$ and $v$, and a path $P_i$ in $C_i$ from $u$ to $v$, such that there are no attachment vertices interior to $P_i$.
\item Find a shortest path $S_i$ from $u$ to $v$, using only edges that are not already part of the current partial embedding.
\item If necessary, adjust the positions of the embedded vertices (without changing the combinatorial structure of the embedding) so that $S_i$ may be drawn with straight line edges.
\item Add $S_i$ to the embedding, outside $C_i$, so that the new face between $P_i$ and $S_i$ does not contain $C_i$. After this change, the new bounding cycle $C_{i+1}$ of the partial embedding is formed from $C_i$ by replacing $P_i$ by $S_i$.
\end{enumerate}
\end{enumerate}

When it is successful, this algorithm decomposes the input graph into the cycle $C_1$ and a sequence of edge-disjoint paths $S_1$, $S_2$, etc. Such a decomposition is known as an \emph{open ear decomposition}~\cite{Khu-SN-89}.

This greedy ear decomposition algorithm does not always work for arbitrary planar graphs: even the initial cycle that is found by the algorithm may not be a face of an embedding of the given graph, causing the algorithm to make incorrect assumptions about the structure of the embedding. 
However (ignoring the possible difficulty of performing step d) the algorithm does always correctly embed the arrangement graphs used by Planarity. These graphs may have multiple embeddings; to distinguish among them, define the \emph{canonical embedding} of an arrangement graph to be the one given by the arrangement from which it was constructed. By Lemma~\ref{lem:test-arrangement}, the canonical embedding is unique.
 As we prove below, the cycles of an arrangement graph that the algorithm assumes to be faces really are faces of the canonical embedding.

\begin{figure}[t]
\centering\includegraphics[scale=0.4]{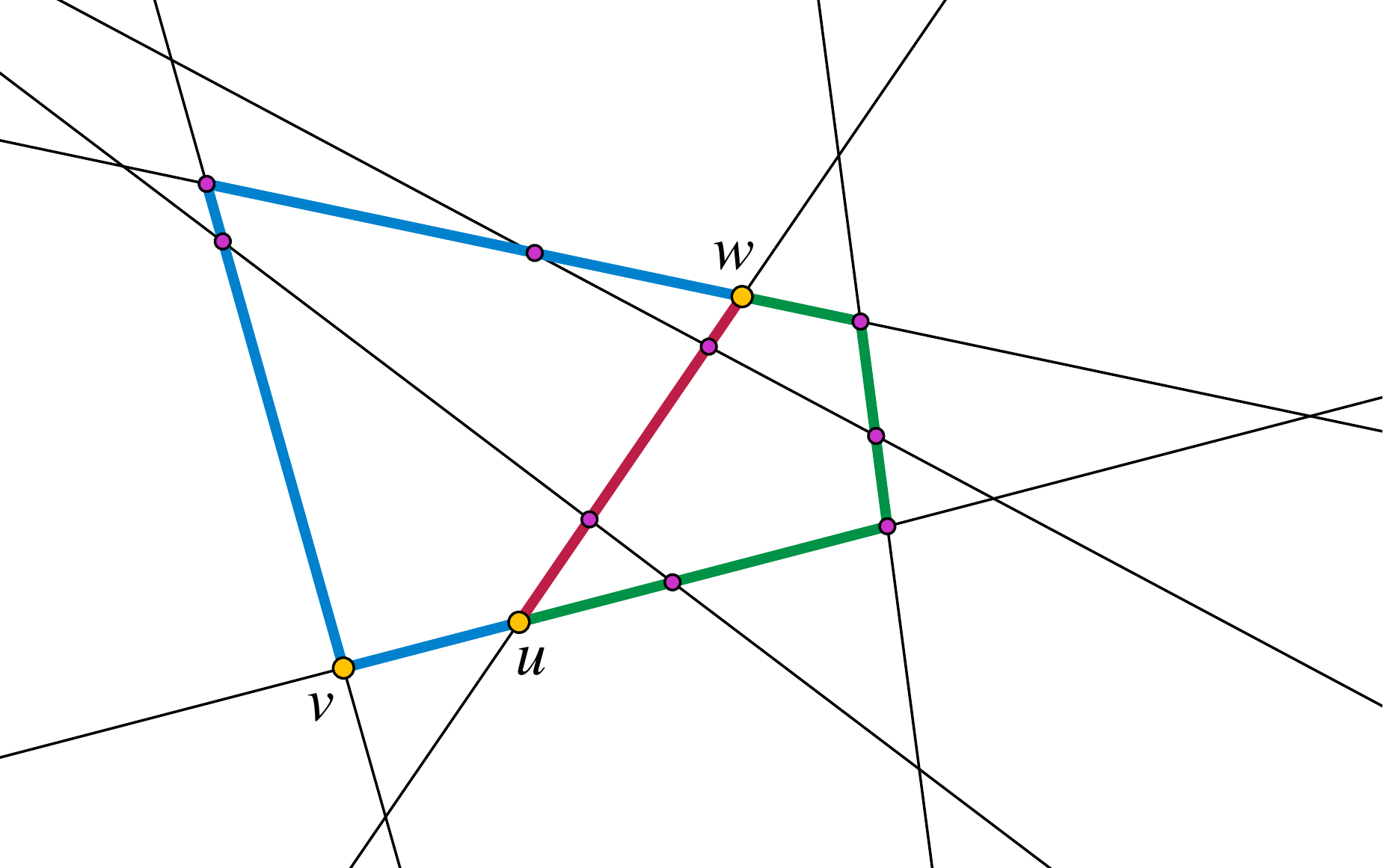}
\caption{Illustration for the proof of Lemma~\ref{lem:first-step}. Every non-facial cycle $C$ through vertex $v$ (blue and green edges) is crossed by at least one line $\ell=uw$ (red edges), forming a theta-graph. All the vertices on the red path of the theta are matched by an equal number of vertices on each of the other two paths, caused by crossings with the same lines, and the other two paths have additional vertices at their bends, so the red-blue cycle is shorter than the blue-green cycle.}
\label{fig:theta}
\end{figure}

\begin{lemma}
\label{lem:first-step}
Let $v$ be an arbitrary vertex of arrangement graph $G$, and $C$ be a shortest cycle containing $v$. Then $C$ is a face of the canonical embedding of~$G$.
\end{lemma}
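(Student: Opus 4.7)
The plan is a proof by contradiction: assuming $C$ is shortest through $v$ but is not a face of the canonical embedding of $G$, I will exhibit a cycle through $v$ strictly shorter than $C$.

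If $C$ is not a face, some pseudoline $\ell$ has an arc $A_\ell$ with endpoints $u,w$ on $C$ and interior strictly inside the bounded region of $C$. Let $P_1, P_2$ be the two arcs of $C$ from $u$ to $w$, labeled so that $v \in P_1 \cup \{u,w\}$; together, $A_\ell$, $P_1$, $P_2$ form a theta graph. It suffices to show $|A_\ell| < |P_2|$, for then the simple cycle $P_1 \cup A_\ell$ passes through $v$ and has length strictly less than $|P_1| + |P_2| = |C|$, contradicting minimality.

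Write $|A_\ell| = k+1$, where $k$ is the number of interior vertices of $A_\ell$, each being a crossing $\ell \cap m_i$ for a distinct pseudoline $m_i$. I bound $|P_2| - 1$ from below by separating the internal vertices of $P_2$ into \emph{straight} ones (where $P_2$ continues along a single pseudoline) and \emph{bend} ones (where $P_2$ switches pseudolines). For bends: since $\ell$ transversally crosses $C$ at $u$, the two $C$-edges at $u$ must both lie on the other pseudoline $p_u$ through $u$, so $P_2$'s first edge is on $p_u$ and, symmetrically, its last edge is on $p_w$. Since $p_u \ne p_w$ (otherwise a single pseudoline would cross $\ell$ at both $u$ and $w$), $P_2$ must bend at least once. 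For straights: each $m_i$ is a simple infinite curve and $A_\ell \cup P_2$ is a simple closed curve, so the number of transversal intersections of $m_i$ with $A_\ell \cup P_2$ is even; since $m_i$ transversally crosses $A_\ell$ exactly once, it transversally crosses $P_2$ an odd, hence positive, number of times. Any such crossing must occur at an internal straight vertex of $P_2$ where $m_i$ is the unique ``other'' pseudoline, and distinct $m_i$ give distinct such vertices. Summing, $P_2$ has at least $k$ straight plus at least $1$ bend internal vertices, so $|P_2| \ge k+2 > |A_\ell|$.

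The main technical hurdle lies in the parity count when some $m_i$ shares edges with $P_2$, so that the intersection $m_i \cap P_2$ includes whole arcs rather than isolated transversal crossings. This is handled by perturbing each shared arc slightly off $P_2$ to a fixed side: each perturbation contributes an even number (zero or two) of extra crossings at the shared-arc endpoints, preserving the parity of the transversal crossing count and hence still forcing a genuine transversal crossing of $m_i$ with $P_2$ away from shared arcs, which gives the required straight vertex of $P_2$. Once this bookkeeping is in place, the counting above goes through and the assumed non-facial $C$ is beaten by $P_1 \cup A_\ell$.
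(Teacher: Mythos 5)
Your proof follows essentially the same route as the paper's: if $C$ is not a face, a pseudoline $\ell$ chords it, forming a theta-graph in which each interior vertex of the chord is matched by a crossing of the same pseudoline with the opposite arc, and that arc additionally must bend at least once, so swapping in the chord shortens the cycle. You are in fact somewhat more careful than the paper --- you replace only the arc avoiding $v$, justify the crossing count by a Jordan-curve parity argument, derive the forced bend from $p_u\neq p_w$, and at least acknowledge the shared-edge degeneracies that the paper passes over silently --- but these are refinements of the same argument rather than a different one.
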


\begin{proof}
Let $C$ be an arbitrary simple cycle through $v$. Then if $C$ is not a face of the arrangement forming~$G$, there is a line $\ell$ that crosses it; let $u$ and $w$ be two vertices on the boundary of $C$ connected through the interior of $C$ by $\ell$ (Figure~\ref{fig:theta}). Then $C$ together with the path along $\ell$ from $u$ to $w$ form a theta-graph, a graph with two degree three vertices ($u$ and $w$) connected by three paths.  Every vertex of $\ell$ between $u$ and $w$ is caused by a crossing of $\ell$ with another line that also must cross the other two paths of the theta-graph; in addition, each of these two paths must bend at least once at a vertex that does not correspond to a line that crosses $\ell$.
Therefore, the path through $\ell$ is strictly shorter than the other two paths in the theta-graph. Replacing one of the two paths of $C$ from $u$ to $w$ by the path through $\ell$ produces a shorter cycle that still contains $v$. Since an arbitrary cycle $C$ that is not a face can be replaced by a shorter cycle through $v$, it follows that every shortest cycle through $v$ is a face.
\end{proof}

\begin{lemma}
\label{lem:ear}
Let $D$ be a drawing of a subset of the faces of the canonical embedding of an arrangement graph $G$ whose union is a topological disk, let $u$ and $v$ be two attachment vertices on the boundary of $D$ with no attachment vertices interior to the boundary path $P$ from $u$ to $v$, and let $S$ be a shortest path from $u$ to $v$ using only edges not already part of $D$. Then the cycle formed by the union of $P$ and $S$ is a face of the canonical embedding of~$G$.
\end{lemma}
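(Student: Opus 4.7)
The plan is to argue by contradiction, adapting the theta-graph analysis of Lemma~\ref{lem:first-step}. Assume that $C = P \cup S$ is not a face of the canonical embedding; then the region $R$ bounded by $C$ on the side opposite to $D$ is not a single arrangement face, so it contains at least one edge of some pseudoline $\ell$. Let $Q$ be a maximal sub-path of $\ell$ whose edges lie in $R$, and let $a, b$ be its endpoints, which lie on the boundary $C$.

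The crucial new ingredient beyond Lemma~\ref{lem:first-step} is to locate $a$ and $b$ on $C$. I claim that neither $a$ nor $b$ can lie in the interior of $P$. Indeed, an interior vertex $x$ of $P$ is non-attachment by hypothesis, so all four of its incident edges in $G$ are already embedded in $D$; because $D$ sits entirely on the side of $P$ opposite from $R$, none of those edges can reach into $R$, and so no pseudoline can cross the interior of $P$ into $R$. Thus $a, b \in S \cup \{u,v\}$, and every edge of $Q$ lies in $R$, hence outside $D$.

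With the endpoints restricted in this way, the theta-graph argument of Lemma~\ref{lem:first-step} applies essentially verbatim to the configuration formed by $Q$ together with the two arcs of $C$ joining $a$ to $b$: each interior vertex of $Q$ records a crossing of $\ell$ with some other pseudoline $\ell'$, and since $\ell'$ meets $\ell$ exactly once, $\ell'$ must cross each of the other two arcs of the theta exactly once. Matching the interior vertices of $Q$ with vertices on each of the two arcs, together with the existence on each arc of at least one bend that is not paired with any such $\ell'$, yields the strict inequality $|Q| < |S_{ab}|$, where $S_{ab}$ denotes the portion of $S$ between $a$ and $b$.

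The contradiction then follows immediately: replacing $S_{ab}$ by $Q$ inside $S$ produces a walk from $u$ to $v$ of length strictly less than $|S|$ whose edges all lie outside $D$, and this walk may be shortened to a simple $u$-to-$v$ path of no greater length, contradicting the minimality of $S$. The main obstacle in the whole argument is the second paragraph, namely ruling out pseudoline crossings into $R$ through the interior of $P$; once that non-attachment observation is in hand, the rest is a direct reuse of the theta-graph machinery already developed for Lemma~\ref{lem:first-step}.
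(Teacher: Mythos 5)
Your proposal is correct and follows essentially the same route as the paper: a theta-graph contradiction in which the key new step is observing that the crossing pseudoline cannot enter through the interior of $P$ because interior vertices of $P$ are non-attachment vertices, forcing both degree-three vertices of the theta onto $S$ and yielding a shorter replacement for a subpath of $S$. Your write-up is if anything slightly more explicit than the paper's (checking that the replacement edges lie outside $D$ and shortening the resulting walk to a simple path), but the argument is the same.
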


\begin{proof}
Assume for a contradiction that $P\cup S$ is not a face; then as in the proof of Lemma~\ref{lem:first-step}, this cycle must be crossed by a line $\ell$, a path $L$ of which forms a theta-graph together with $P\cup S$. Additionally, because $P$ is assumed to be part of a drawing of a subset of the faces of $G$, it cannot be crossed by $\ell$, for any crossing would cause it to have an attachment vertex between $u$ and $v$. Therefore, the two degree-three vertices of the theta-graph both belong to $S$. By the same reasoning as in the proof of Lemma~\ref{lem:first-step}, $L$ must be strictly shorter than the other two paths of the theta-graph, so replacing the path that is entirely within $S$ by $L$ would produce a shorter path from $u$ to $v$, contradicting the construction of $S$ as a shortest path. This contradiction shows that $P\cup S$ must be a face, as the lemma states.
\end{proof}

\begin{theorem}
When the greedy ear decomposition embedding algorithm described above is applied to an arrangement graph~$G$, it correctly constructs the canonical embedding of~$G$.
\end{theorem}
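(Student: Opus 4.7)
The plan is to proceed by induction on the number of ears that have been attached, carrying the invariant that after the $i$th iteration the current partial embedding is a union of faces of the canonical embedding of $G$ whose union is a topological disk bounded by the cycle $C_{i+1}$. Once this invariant is maintained through every iteration, the algorithm's final drawing has all of its bounded faces equal to faces of the canonical embedding; by the uniqueness of the canonical embedding (Lemma~\ref{lem:test-arrangement}), the output drawing must be exactly the canonical embedding.

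For the base case, Lemma~\ref{lem:first-step} says that the minimum-length cycle $C_1$ through the chosen starting vertex is already a face of the canonical embedding, so the region it bounds is a one-face topological disk. For the inductive step, suppose the invariant holds at stage $i-1$, so that the current drawing is a topological disk $D$ bounded by $C_i$. The algorithm selects attachment vertices $u, v$ together with a boundary subpath $P_i$ that has no interior attachment vertex, and then finds a shortest path $S_i$ from $u$ to $v$ using edges not in $D$. This is precisely the situation of Lemma~\ref{lem:ear}, which certifies that the cycle $P_i \cup S_i$ bounds a face $F_i$ of the canonical embedding. Gluing $F_i$ onto $D$ along $P_i$, on the side opposite to $D$, produces the enlarged topological disk $D \cup F_i$ bounded by the new cycle $C_{i+1}$.

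The step requiring the most care is verifying that this attachment really is a gluing of two disks along a single boundary arc, i.e.\ that $S_i$ meets $D$ only at its endpoints $u$ and $v$. Interior vertices of $P_i$ are by hypothesis non-attachment and therefore carry no unembedded incident edges, so $S_i$ cannot pass through them; likewise any vertex strictly interior to $D$ is already saturated by embedded edges and cannot lie on $S_i$. The remaining possibility is that $S_i$ visits an attachment vertex on the complementary boundary arc $C_i \setminus P_i$; but any such visit would force $P_i \cup S_i$ to fail to be a simple closed curve, contradicting the conclusion of Lemma~\ref{lem:ear} that it is a face cycle in the unique canonical embedding. This rules out the bad case and preserves the invariant.

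Finally, the algorithm terminates only when no edges remain unembedded. At that point every edge of $G$ lies on the boundary of some $F_i$ or on the boundary of the unbounded region, and each $F_i$ coincides with a face of the canonical embedding; by uniqueness, the drawing produced is the canonical embedding of $G$, as claimed.
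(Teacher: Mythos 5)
Your induction is the same as the paper's: base case from Lemma~\ref{lem:first-step}, inductive step from Lemma~\ref{lem:ear}, with the invariant that the partial embedding is a union of canonical faces forming a disk. The extra care you take to show that $S_i$ meets the current disk only at $u$ and $v$ is a reasonable elaboration (the paper leaves this implicit inside Lemma~\ref{lem:ear}), and your argument for it --- interior vertices of $P_i$ and of $D$ are saturated, and a visit to the complementary arc would contradict $P_i\cup S_i$ being a face cycle --- is sound.

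There is one small but genuine omission: you never verify that step 4(b) can actually be carried out, i.e.\ that whenever unembedded edges remain there exist \emph{two} attachment vertices $u\ne v$ on the current boundary (and hence a valid path $P_i$ between them with no interior attachment vertices). If only one attachment vertex existed, the algorithm would be stuck and the induction would never reach the final configuration. The paper closes this by observing that arrangement graphs are 2-vertex-connected (a consequence of the fact that adding $v_\infty$ makes them 3-connected, per Bose et al.), so a single attachment vertex cannot separate the unembedded part of the graph from the embedded disk. You should add this one sentence; otherwise your argument establishes only that \emph{if} the algorithm terminates with all edges embedded then its output is the canonical embedding, not that it always does so.
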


\begin{proof}
We use induction on the number of steps of the algorithm, with an induction hypothesis that after each step the partial embedding found so far consists of faces of the canonical embedding whose union is a disk. Lemma~\ref{lem:first-step} shows as a base case that the induction hypothesis is true after the first step. In each subsequent step, the ability to find two attachment vertices follows from the fact that arrangement graphs are 2-vertex-connected, which in turn follows from the fact that they can be augmented by a single vertex to be 3-vertex-connected~\cite{BosEveWis-IJCGA-03}. Lemma~\ref{lem:ear} shows that, if the induction hypothesis is true after $i$ steps then it remains true after $i+1$ steps. 
\end{proof}

\section{Conclusions}

We have found a grid drawing algorithm for pseudoline arrangement graphs that uses area within a small factor of linear, much smaller than the known quadratic grid area lower bounds for arbitrary planar graphs. We have also shown that these graphs have near-linear universal point sets within a constant factor of the same area, and that a simple greedy embedding heuristic suitable for hand solution of Planarity puzzles is guaranteed to find a correct embedding.

\begin{figure}[t]
\centering\includegraphics[scale=0.5]{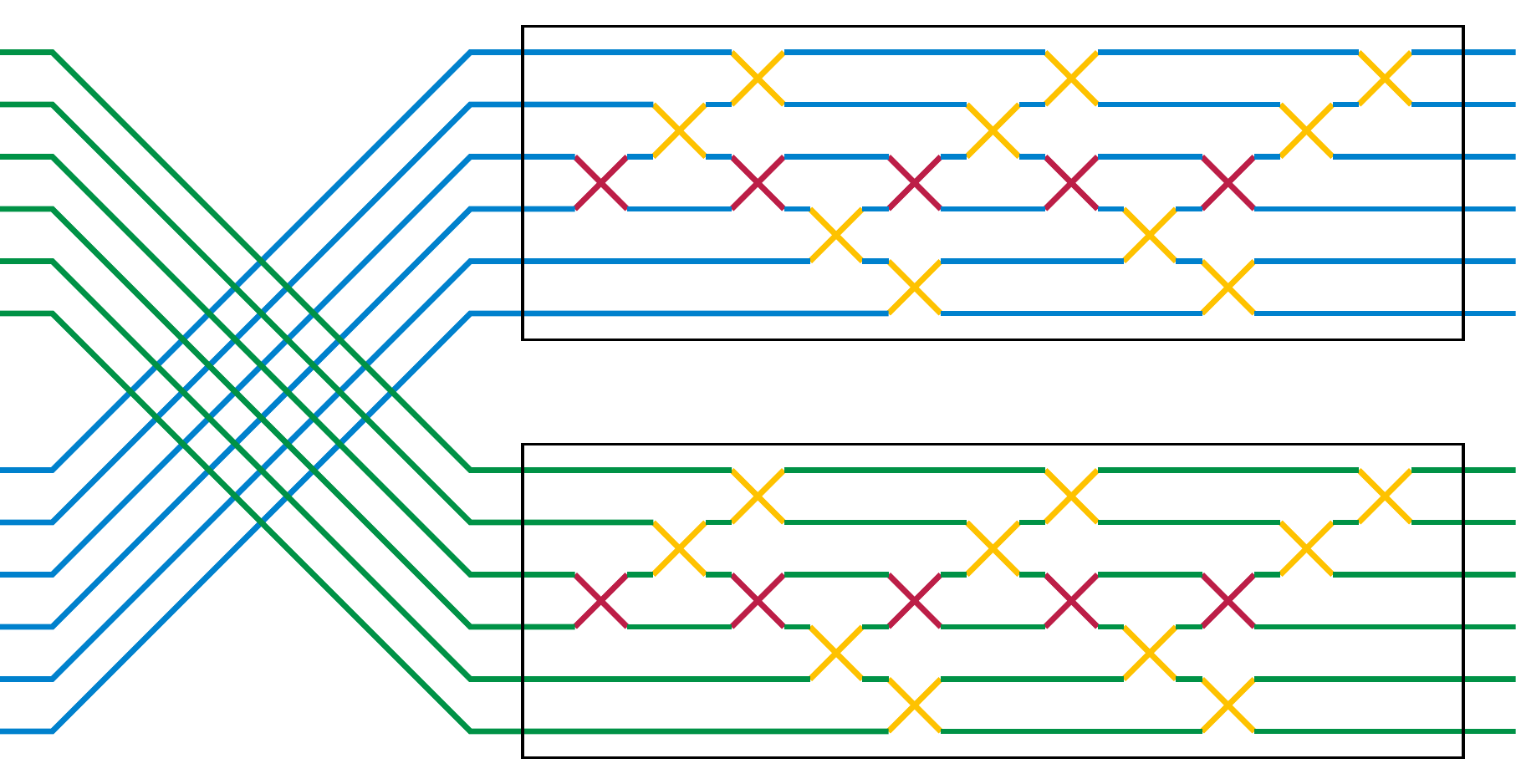}
\caption{Two stacked arrangements of $\ell/2$ pseudolines, each with high level complexity, cause our algorithm to create wide drawings no matter how it chooses a wiring diagram.}
\label{fig:stacked-worst-case}
\vspace{-2ex}
\end{figure}

The precise area used by our grid drawing algorithm depends on the worst-case behavior of the function $\klevel({\mathcal D})$ counting the number of crossings in a $k$-level of an arrangement; closing the gap between the upper and lower bounds for this function remains an important and difficult open problem in combinatorial geometry. However, closing this gap is not the only possible method for improving our drawing algorithm.

A tempting avenue for improvement is to observe that a single pseudoline arrangement may be represented by many different wiring diagrams; therefore, we can select the wiring diagram $\mathcal{D}$ that represent the same pseudoline arrangement and that minimizes $\klevel({\mathcal{D}})$. However, this would not improve our worst case width by more than a constant factor. For, if the input forms a pseudoline arrangement constructed by stacking two arrangements of $\ell/2$ lines with maximal $k$-level complexity, one above the other (Figure~\ref{fig:stacked-worst-case}), then one of these two instances will survive intact in any wiring diagram for the arrangement, forcing our algorithm to produce a drawing with width at least $\maxklevel(\ell/2)$. Further improvements in our algorithm will likely come by finding an alternative layout that avoids the complexity of $k$-levels, by proving that $k$-levels are small in the average case if not the worst case, or by reducing the known combinatorial bounds on $k$-levels.

{\raggedright
\bibliographystyle{abuser}
\bibliography{planarity}}

\end{document}